\newtheorem{Th}{Theorem}
\newtheorem{Def}{Definition}
\newtheorem{Rem}{Remark}
\begin{document}

\thispagestyle{empty}

\title{White noise perturbation of locally stable dynamical systems}

\author{Oskar Sultanov}

\address{Oskar Sultanov,
\newline\hphantom{iii}  Institute of Mathematics USC RAS
\newline\hphantom{iii}  112 Chernyshevsky str., Ufa 450008, Russia}

\email{oasultanov@gmail.com}

\maketitle {\small
\begin{quote}

\noindent{\bf Mathematics Subject Classification: }{93E15, 34D10, 60H10}

\end{quote}
\begin{quote}
\noindent{\bf Abstract.}
The influence of small random perturbations on a deterministic dynamical system with a locally stable equilibrium is considered.
The perturbed system is described by the It\^{o} stochastic differential equation. 
It is assumed that the noise does not vanish at the equilibrium. 
In this case the trajectories of the stochastic system may leave any bounded domain with probability one. 
We analyze the stochastic stability of the equilibrium under a persistent perturbation by white noise on an asymptotically long time interval $0\leq t\leq \mathcal O(\mu^{-N})$, where $0<\mu\ll 1$ is a perturbation parameter.
\medskip

\noindent{\bf Keywords:} dynamical system, random perturbation, stability, stochastic differential equation
\end{quote} }

\section{Problem statement}
Consider the system of ordinary differential equations:
\begin{equation}
\frac{d{\bf x}}{d t}={\bf f}({\bf x}, t), \quad t\geq 0, \quad {\bf x}\in\mathbb R^n.
\label{GS}\end{equation}
Let ${\bf x}=0$ be an equilibrium, ${\bf f}(0, t)\equiv 0$.
Suppose the vector-valued function ${\bf f}({\bf x}, t)=(f_1({\bf x}, t),\dots, f_n({\bf x}, t))$ is continuous, satisfies a Lipschitz condition: $|{\bf f}({\bf x}, t)-{\bf f}({\bf z}, t)|\leq L |{\bf x}-{\bf z}|$ and a growth condition: $|{\bf f}({\bf x}, t)|\leq M(1+|\bf x|)$ for all ${\bf x}, {\bf z}\in\mathbb R^n$, $ t\geq 0$ with positive constants $L$, $M$.
Assume that there exists a local Lyapunov function $V({\bf x}, t)\in\mathcal C^{2,1}(\mathbb R^n\times\mathbb R)$ satisfying the following properties:
\begin{equation}
    \begin{array}{c}
        \displaystyle
            |{\bf x}|^2\leq V({\bf x}, t)\leq A |{\bf x}|^2,\quad |\partial_{{\bf x}} V|^2\leq B |{\bf x}|^2, \quad |\partial_{ x_i}\partial_{x_j} V|\leq C, \\
        \displaystyle
            \frac{dV}{d t}\Big|_{\eqref{GS}}\stackrel{def}{=}\frac{\partial V}{\partial  t}+\sum\limits_{k=1}^{n}\frac{\partial V}{\partial x_k} f_k\leq - \gamma V \quad  \forall \, |{\bf x}|\leq r_0, \quad t\geq 0,
    \end{array}
\label{LF}
\end{equation}
with constants $A,B,C,\gamma,r_0>0$.
This implies that the solution ${\bf x}(t)\equiv 0$ is locally asymptotically stable (see~\cite{Krasov63}).
In this case, system \eqref{GS} may have other stable fixed points. Note that such Lyapunov functions are constructed in the study of nonlinear differential equations (see, for instance,~\cite{OS10,LKOS13}). In this paper we investigate the stability of the trivial solution under a persistent perturbation by white noise.

Together with \eqref{GS} we consider the It\^{o} stochastic differential equation:
\begin{gather}\label{GSP}
d{\bf y}(t)={\bf f}({\bf y}, t)\, d t+\mu\, G({\bf y}, t)\,d{\bf w}({ t}), \quad  {\bf y}(0)={\bf y}_0\in\mathbb R^n.
\end{gather}
Here ${\bf w}( t)=(w_1( t),\dots,w_m( t))$ is $m$-dimensional Wiener process defined on a probability space $(\Omega,\mathcal F,{\bf P})$, $G({\bf y}, t)=\{g_{ij}({\bf y}, t)\}$ is a given continuous $n\times m$ matrix, satisfies the Lipschitz condition:
\begin{gather}\label{Lip}
    \|G({\bf y}, t)-G({\bf z}, t)\|\leq L |{\bf y} -{\bf z}| \quad \forall \, {\bf y}, {\bf z}\in\mathbb R^n, \ \ t\geq 0,
\end{gather}
and the growth condition:
\begin{gather}\label{Grow}
    \|G({\bf y}, t)\|\stackrel{def}{=}\max_{i,j}|g_{ij}({\bf y}, t)|\leq M(1+|{\bf y}|) \quad \forall\, {\bf y}\in\mathbb R^n, \ \   t\geq 0.
\end{gather}
A small positive parameter $0<\mu\ll 1$ controls the intensity of the random perturbation.
The matrix $G({\bf y}, t)$ and the initial data ${\bf y}_0$ are assumed to be deterministic.
Note that these constraints are sufficient for the existence and uniqueness of a global solution to the initial value problem \eqref{GSP} for all ${\bf y}_0\in\mathbb R^n$ (see~\cite[\S 5.2]{Oksendal98},\cite[\S 3.3]{Hasm69},\cite[\S 6.2]{Arnold74}).
In addition, suppose that the trivial solution ${\bf y}(t)\equiv 0$ is not preserved in the perturbed system, namely, $G(0,t)\not\equiv 0$.
The goal of this paper is to describe a class of matrices $G$ guaranteing the stability in probability of the trivial solution to system \eqref{GS} under a persistent perturbation by white noise.

Remind the concept of stability in probability~\cite[\S 5.3]{Hasm69},\cite{Kac98} or stochastic stability~\cite[Ch. 2]{Kushner},
\cite[Ch. 11]{Schuss10}:
\begin{Def}
The solution ${\bf x}( t)\equiv 0$ to system \eqref{GS} is said to be strongly stable in probability with respect to white noise uniformly for $G$ from the set $\mathcal P$, if
$$\forall\, \varepsilon, \nu > 0 \ \ \exists\, \delta, \Delta>0: \ \ \forall \, |{\bf y}_0|<\delta, \ \ \mu<\Delta, \ \ G\in\mathcal P$$
the solution ${\bf y}( t)$ to the initial value problem \eqref{GSP} satisfies the inequality:
$${\bf P}(\sup_{ t\geq 0}|{\bf y}( t)|\geq \varepsilon)\leq \nu.$$
\end{Def}
Thus, if the trivial solution is strongly stable, then the trajectory of the process ${\bf y}(t)$ with sufficiently small initial values and perturbations does not leave an arbitrary small neighbourhood of zero with probability tending to one.
If the solution ${\bf x}(t)\equiv 0$ is weakly stable\footnote{In this case it is assumed that ${\bf P}(|{\bf y}( t)|\geq \varepsilon)\leq \nu$ for all $t>0$.}, then almost all realizations may leave the ball $\mathcal B_\varepsilon=\{{\bf y}\in \mathbb R^n: |{\bf y}|<\varepsilon\}$, but at each instant $t=t_\ast$ the random variable ${\bf y}( t_\ast)$ satisfies $|{\bf y}( t_\ast)|<\varepsilon$ with probability tending to one.
It is easy to see that strong stability implies weak stability, but the converse is not true in general~\cite[\S 6.11]{Hasm69}.

Many authors (see~\cite{Hasm69,Kushner,Arnold74,Schuss10,Kac98,Mao94}) have considered the stability of stochastic differential equations with $G(0, t)\equiv 0$.
However, the problem of stability under persistent perturbations was investigated only in several works~\cite{Hasm69,Hasm65,VF,Hapaev88,LK14}. Let us briefly recall the main known results concerning the case $G(0, t)\not \equiv 0$.
Firstly, from~\cite{VF} it follows that there is no strong stability in autonomous systems (where both ${\bf f}$ and $G$ do not explicitly depend on $t$): the trajectories of perturbed system may leave any bounded domain with probability one.
In~\cite[\S 7.4]{Hasm69}, Khasminskii proved that if $\|G({\bf y}, t)\|$ decays sufficiently fast at infinity $t\to\infty$ and there exists a suitable global Lyapunov function, the solution ${\bf x}(t)\equiv 0$ remains stable in a certain sense.
On the other hand, the existence of a global Lyapunov function for stochastic equation \eqref{GSP} is sufficient for weak stability w.r.t. white noise with uniformly bounded matrix $G$ (see~\cite{Hasm65}). However, if the deterministic system has at least one more stable fixed point ${\bf x}_\ast\neq 0$, the solution ${\bf x}(t)\equiv 0$ is not even weakly stable w.r.t. such class of perturbations.
Thus, the stability under persistent perturbation by white noise occurs in a fairly narrow class of systems, and it is therefore worth while to consider the problem of stability on a finite time interval (see~\cite{Hapaev88,VF}).
One variant of this approach is to find the largest possible time interval $[0; T]$ on which solutions to the perturbed equation are close to the equilibrium of the deterministic system. It was shown in~\cite[Ch. 9]{Hapaev88} that if the matrix $G$ is uniformly bounded and there exists a local Lyapunov function having properties \eqref{LF} with $\gamma=0$, then the equilibrium is strongly stable on the interval $0\leq  t\leq \mathcal O(\mu^{-2})$. Similar results on weak stability were obtained in~\cite{LK14}, where a barrier for the Kolmogorov equation (see~\cite[\S 4.4]{Schuss10}) was constructed using a Lyapunov function known in a neighborhood of the equilibrium.
However, the stability of solution on longer time intervals remains an open question. The present paper is devoted to solving this problem.

In this work we prove that if $G$ is uniformly bounded matrix, then the locally asymptotically stable solution ${\bf x}(t)\equiv 0$ to the deterministic system is strongly stable with respect to white noise on an asymptotically long time interval $0\leq t\leq \mathcal O(\mu^{-2N})$ for any natural number $N\geq 1$.

\section{Stability on asymptotically long time interval}

Let us clarify the definition of stability we shall use in this section.
\begin{Def}
The solution ${\bf x}( t)\equiv 0$ to system \eqref{GS} is said to be strongly stable in probability with respect to white noise on a time interval $[0;  T]$ uniformly for $G$ from the set $\mathcal P$, if
$$\forall\, \varepsilon, \nu > 0 \ \ \exists\, \delta, \Delta>0: \ \ \forall \, |{\bf y}_0|<\delta, \ \ \mu<\Delta, \ \ G\in\mathcal P$$
the solution ${\bf y}( t)$ to the initial value problem \eqref{GSP} satisfies the inequality:
$${\bf P}(\sup_{ t\in[0; T]}|{\bf y}( t)|\geq \varepsilon)\leq \nu.$$
\end{Def}

We define the class $\mathcal A$ as a set of matrices $G({\bf y}, t)$ satisfying estimates \eqref{Lip}, \eqref{Grow}, and
$$
    \sup_{|{\bf y}|\leq r_0, t\geq 0}\|\sigma({\bf y}, t)\|<\infty, \quad \sigma\stackrel{def}{=}\frac{G\cdot G^\ast}{2}=\{\sigma_{ij}\}.
$$
Let $\mathcal A_h$ be a set of matrices $G$ from the set $\mathcal A$ such that $\|\sigma({\bf y}, t)\|\leq h$ for all $|{\bf y}|\leq r_0$, $t\geq 0$.
Let us also define the operator (see~\cite[\S 3.6]{Hasm69}):
$$
    \mathcal L\stackrel{def}{=}\partial_ t+\sum_{k=1}^n f_k({\bf y}, t) \partial_{x_k} + \mu^2 \sum_{i,j=1}^n \sigma_{ij}({\bf y}, t)\partial_{x_i}\partial_{x_j}.
$$

We have
\begin{Th}\label{Th1}
Suppose that unperturbed system \eqref{GS} has a Lyapunov function $V({\bf x},t)$ satisfying \eqref{LF}. Then for all $N\in\mathbb N$, $h>0$, and $0<\varkappa<1$ the solution ${\bf x}(t)\equiv 0$ to system \eqref{GS} is strongly stable in probability with respect to white noise on the time interval $[0; \mu^{-2N+\varkappa}]$ uniformly for $G\in\mathcal A_h$.
\end{Th}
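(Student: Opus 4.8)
The plan is to control the first exit time $\tau=\inf\{t\ge0:|{\bf y}(t)|\ge\varepsilon\}$ of the diffusion ${\bf y}(t)$ from the ball $\mathcal B_\varepsilon$ and to show that ${\bf P}(\tau\le T)\le\nu$ for $T=\mu^{-2N+\varkappa}$. Since the event $\{\sup_{t\in[0,T]}|{\bf y}(t)|\ge\varepsilon\}$ only shrinks as $\varepsilon$ grows, I may assume $\varepsilon\le r_0$; then all of \eqref{LF} is at my disposal inside $\mathcal B_\varepsilon$. The bound on ${\bf P}(\tau\le T)$ will come from Dynkin's formula applied to a suitably renormalized Lyapunov function, after which $\delta,\Delta$ are fixed by an elementary computation.

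First I observe why $V$ itself does not suffice: \eqref{LF} and $\|\sigma\|\le h$ give $\mathcal L V\le-\gamma V+C_1\mu^2$ on $\mathcal B_{r_0}$ with $C_1=n^2hC$, and the naive supermartingale estimate with $V$ produces a term $\sim\mu^2T=\mu^{2-2N+\varkappa}$, which is useless once $N\ge2$. To turn the constant source term into a term of higher order in $\mu$ I would pass to a power of $V$: set
$$
U({\bf x},t)=\bigl(V({\bf x},t)+\lambda\mu^2\bigr)^{N}
$$
with a constant $\lambda=\lambda(N,\gamma,n,B,C,h)>0$ to be chosen. Using $\mathcal L(V^k)=kV^{k-1}\mathcal L V+\mu^2k(k-1)V^{k-2}\sum_{i,j=1}^n\sigma_{ij}\partial_{x_i}V\,\partial_{x_j}V$ together with $\mathcal L V\le-\gamma V+C_1\mu^2$ and $0\le\sum_{i,j}\sigma_{ij}\partial_{x_i}V\,\partial_{x_j}V\le C_2V$ (with $C_2=nhB$; both $C_1,C_2$ uniform over $G\in\mathcal A_h$), the binomial expansion of $\mathcal L U$ on $\mathcal B_{r_0}$ takes the form $\mathcal L U\le\sum_{j=0}^{N}e_j\,\mu^{2(N-j)}V^{j}$ with $\mu$-independent coefficients $e_j$; a short computation shows that $e_j\le0$ for $1\le j\le N$ provided $\lambda$ is large enough, while $e_0=NC_1\lambda^{N-1}$. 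Since $V\ge0$ on $\mathcal B_{r_0}$ this yields
$$
\mathcal L U\le C_\ast\mu^{2N}\quad\text{on }\mathcal B_{r_0},\qquad C_\ast:=NC_1\lambda^{N-1},
$$
with $\lambda$ and $C_\ast$ uniform over $G\in\mathcal A_h$. (Alternatively one may keep the leading term $-\tfrac12\gamma N\,U$ and absorb the source by Young's inequality — the outcome is the same.)

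Since $U\in\mathcal C^{2,1}(\mathbb R^n\times\mathbb R)$ and $|\partial_{{\bf x}}U|$, $\|G\|$ are bounded on $\mathcal B_\varepsilon$, the stochastic integral in Itô's formula for $U({\bf y}(t\wedge\tau),t\wedge\tau)$ is a martingale, hence
$$
\mathbb E\,U\bigl({\bf y}(t\wedge\tau),t\wedge\tau\bigr)=U({\bf y}_0,0)+\mathbb E\!\int_0^{t\wedge\tau}\!\mathcal L U\,ds\le U({\bf y}_0,0)+C_\ast\mu^{2N}t.
$$
On $\{\tau\le t\}$ path continuity gives $|{\bf y}(\tau)|=\varepsilon$, so $U({\bf y}(\tau),\tau)\ge V({\bf y}(\tau),\tau)^{N}\ge\varepsilon^{2N}$ by \eqref{LF}; combined with $U({\bf y}_0,0)\le(A\delta^2+\lambda\mu^2)^N$ (valid for $\delta\le r_0$) this gives
$$
{\bf P}\Bigl(\sup_{t\in[0,T]}|{\bf y}(t)|\ge\varepsilon\Bigr)={\bf P}(\tau\le T)\le\frac{(A\delta^2+\lambda\mu^2)^N+C_\ast\mu^{2N}T}{\varepsilon^{2N}}.
$$
For $T=\mu^{-2N+\varkappa}$ the last summand equals $C_\ast\varepsilon^{-2N}\mu^{\varkappa}\to0$ as $\mu\to0$, so given $\varepsilon,\nu>0$ I first pick $\delta\in(0,r_0]$ with $(2A\delta^2)^N\le\tfrac{\nu}{2}\varepsilon^{2N}$ and then $\Delta>0$ so small that $\lambda\mu^2\le A\delta^2$ and $C_\ast\mu^{\varkappa}\le\tfrac{\nu}{2}\varepsilon^{2N}$ for all $\mu<\Delta$; the right-hand side is then $\le\nu$ uniformly over $|{\bf y}_0|<\delta$, $\mu<\Delta$, $G\in\mathcal A_h$, which is the assertion.

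The only genuinely delicate point is the construction of $U$: one must verify that $\mathcal L U=O(\mu^{2N})$ on $\mathcal B_{r_0}$ — equivalently, that the graded coefficients $e_j$ in the expansion of $\mathcal L U$ can all be made non-positive for $j\ge1$ by a single $\mu$-independent choice of $\lambda$. Everything else (the reduction to an exit time, the Dynkin step, the final choice of $\delta,\Delta$) is routine. I also expect the exponent to be essentially optimal: retaining the negative drift does not remove the factor $t$ above, since the process stops accumulating it after leaving $\mathcal B_\varepsilon$ — in agreement with the known impossibility of genuine strong stability (for all $t\ge0$) under persistent white-noise perturbation.
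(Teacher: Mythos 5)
Your proof is correct, and its engine is the same as the paper's: a perturbation of $V^N$ by terms that are lower order in $V$ but higher order in $\mu$, chosen so that the generator $\mathcal L$ acting on the perturbed function is controlled uniformly over $G\in\mathcal A_h$. The execution differs in two respects. First, where you take the closed form $U=(V+\lambda\mu^2)^N$ and accept a residual source $\mathcal L U\leq C_\ast\mu^{2N}$, the paper builds the recursion $V_k=V^k+\mu^2a_{k-1}V_{k-1}$ with $a_k=(k+1)n^2h(B+C)\gamma^{-1}$ and kills the residual constant entirely by appending the time-dependent corrector $\mu^2n^2hC\,(T-t)$ at the bottom level; expanding the recursion shows $V_N$ is essentially your $U$ with binomial weights replaced by the products $\prod a_j$ plus that corrector. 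Second, this corrector makes the stopped process $V_N({\bf y}(s_t),s_t;T)$ a genuine nonnegative supermartingale, so the paper concludes via Doob's maximal inequality, whereas you conclude via Dynkin's formula at the exit time plus Markov's inequality, paying the price $C_\ast\mu^{2N}T=C_\ast\mu^{\varkappa}$ explicitly; the two routes yield the same bound $\mathcal O(|{\bf y}_0|^{2N})+\mathcal O(\mu^{\varkappa})$ and the same choice of $\delta,\Delta$. Your verification of the sign condition on the graded coefficients $e_j$ (negative term of order $\lambda^{N-j}$ versus positive terms of order $\lambda^{N-1-j}$, so a single large $\mu$-independent $\lambda$ suffices) is the exact analogue of the paper's choice of the $a_k$, and your reduction to $\varepsilon\leq r_0$ and the boundedness of $\partial_{\bf x}U$ and $G$ on $\mathcal B_\varepsilon$ needed for the martingale step are both in order.
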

\begin{proof}
It is reduced to constructing a suitable Lyapunov function for stochastic differential equations \eqref{GSP} on the basis of the Lyapunov function $V({\bf x},t)$ for unperturbed deterministic system \eqref{GS}.

Let $h>0$, $0<\varkappa<1$, $\nu>0$, and $0<\varepsilon<r_0$ be arbitrary positive constants,
${\bf y}(t)$  a solution to system \eqref{GSP} with $|{\bf y}_0|<\delta$, $G\in\mathcal A_h$, and $\tau_{\mathcal I}$ a first exit time of the solution ${\bf y}( t)$ from the domain
$$
    \mathcal I\stackrel{def}{=}\{({\bf y}, t)\in\mathbb R^{n+1}: |{\bf y}|< \varepsilon, \ \  0 <  t <  T\},\quad T>0.
$$
We define the function $s_t=\min\{ \tau_{\mathcal I}, t\}$. Then ${\bf y}(s_ t)$ is the process stopped at the first exit time from the domain $\mathcal I$.

Let us first consider the case $N=1$.
Here we reproduce the arguments of~\cite[Ch. 9]{Hapaev88} with some modifications.
The Lyapunov function for system \eqref{GSP} is constructed in the form:
\begin{gather}\label{V1}
    V_1({\bf y}, t; T)=V({\bf y}, t)+ \mu^2 n^2  h C \cdot (T- t),
\end{gather}
where $V({\bf y},t)$ is the Lyapunov function for system \eqref{GS}.
It is easy to see that $V_1\geq V\geq 0$ and
$$
    \mathcal L V_1  =   \frac{dV}{d t}\Big|_{\eqref{GS}}  +  \mu^2\sum_{i,j=1}^n \sigma_{ij} \, \partial_{x_i}\partial_{x_j} V - \mu^2 n^2  h C \leq -\gamma  V \leq 0
$$
for all $({\bf y}, t)\in \mathcal I$. Hence $V_1({\bf y}(s_ t),s_ t)$ is a nonnegative supermartingale~\cite[\S 5.2]{Hasm69}.
From the properties of the Lyapunov function $V$ and the definition of the function $s_t$, we get the following estimates:
\begin{equation}\label{Est}
\begin{array}{lll}
\displaystyle {\bf P}(\sup_{ 0\leq  t\leq  T} |{\bf y}( t)|\geq \varepsilon)
    & = & \displaystyle {\bf P}(\sup_{ 0\leq  t\leq  T} |{\bf y}( t)|^2\geq \varepsilon^2) \leq \\
    & \leq & \displaystyle {\bf P}(\sup_{0\leq  t\leq  T} V({\bf y}(t), t)\geq \varepsilon^2) \leq \\
    & \leq &  \displaystyle {\bf P}(\sup_{0\leq  t\leq  T} V_1({\bf y}(t), t; T)\geq \varepsilon^2) = \\
    & = & \displaystyle {\bf P}(\sup_{ t\geq 0} V_1({\bf y}(s_ t),s_t;T)\geq \varepsilon^2) \leq  \frac{  V_1({\bf y}_0,0;T)}{\varepsilon^2}.
\end{array}
\end{equation}
The last estimate follows from Doob's supermartingale inequality~\cite[\S 7.3]{VAD}. Note that the initial point ${\bf y}_0$ is assumed to be deterministic.
We define $T=\mu^{-2+\varkappa}$ and
$$
    \delta=\Big(\frac{\varepsilon^2 \nu}{2 A}\Big)^{1/2}, \quad \Delta=\Big(\frac{ \varepsilon^2 \nu }{2\, n^2 h \, C}\Big)^{1/\varkappa}.
$$
Then $V_1({\bf y}_0, 0;T)=A |{\bf y}_0|^2 +\mu^\varkappa n^2  h \, C \leq \varepsilon^2 \nu$ for all $|{\bf y}_0|<\delta$, $\mu<\Delta$.
Taking into account \eqref{Est}, we obtain the estimate:
\begin{gather}\label{EstP}
    {\bf P}(\sup_{0\leq t\leq T}|{\bf y}( t)|\geq \varepsilon)\leq \nu.
\end{gather}
This implies that the solution ${\bf x}(t)\equiv 0$ is strongly stable as $t\leq \mu^{-2+\varkappa}$.

Now let us prove that the trivial solution of system \eqref{GS} is stable as $t\leq \mu^{-4+\varkappa}$. The Lyapunov function is constructed on the basis of the functions $V$ and $V_1$:
$$V_2({\bf y},t;T)=\big( V({\bf y},t)\big)^2+ \mu^2 a_1 V_1({\bf y},t;T),$$ where parameters $a_1$ and $T$ are defined later.
The action of the operator $\mathcal L$ on $V_2$ satisfies the inequality:
\begin{eqnarray*}
    \mathcal L V_2 & = & \, 2 V \,\mathcal L V + 2 \mu^2 \sum_{i,j=1}^n \sigma_{ij}\, \partial_{x_i}V\partial_{x_j}V + \mu^2 a_1 \mathcal L  V_1 \leq \\
    & \leq & - 2 \gamma V^2 + \mu^2 [2\, n^2 h\, (B+C) - a_1 \gamma ]   V
\end{eqnarray*}
as $|{\bf y}|\leq r_0$. We choose $a_1= 2\, n^2 h(B+C)\gamma^{-1}$, then $V_2\geq V^2\geq 0$ and $\mathcal LV_2\leq 0$ for all $({\bf y}, t)\in \mathcal I$.
Hence $V_2({\bf y}(s_ t),s_t;T)$ is a nonnegative supermartingale, and the following inequalities hold:
\begin{equation}\label{Est2}
\begin{array}{lll}
\displaystyle {\bf P}(\sup_{ 0\leq  t\leq T} |{\bf y}( t)|\geq \varepsilon)
    & = & \displaystyle {\bf P}(\sup_{ 0\leq  t\leq T} |{\bf y}( t)|^4\geq \varepsilon^4) \leq \\
    & \leq & \displaystyle {\bf P}(\sup_{ 0\leq  t\leq  T} \big( V({\bf y}( t), t)\big)^2\geq \varepsilon^4) \leq \\
    & \leq & \displaystyle {\bf P}(\sup_{ 0\leq  t\leq  T} V_2({\bf y}( t), t;T)\geq \varepsilon^4)  =  \\
    & = & \displaystyle {\bf P}(\sup_{ t\geq  0} V_2({\bf y}(s_ t),s_ t;T)\geq \varepsilon^4) \leq  \frac{V_2({\bf y}_0,0;T)}{\varepsilon^4}.
\end{array}
\end{equation}
At this step we choose $T=\mu^{-4+\varkappa}$ and define the parameters $\delta>0$, $\Delta>0$ such that
$$\frac{3A^2 \delta^4}{2}+ \frac{\Delta^4 a_1^2}{2}+ \Delta^{\varkappa}  a_1 n^2 h\, C \leq \varepsilon^4\nu.$$
Then the inequality $V_2({\bf y}_0,0; T)\leq\varepsilon^4 \nu$ holds for all $|{\bf y}_0|<\delta$, $\mu<\Delta$ and $G\in\mathcal A_h$.
Combining this with \eqref{Est2}, we get estimate \eqref{EstP} showing the stability on the time interval $[0; \mu^{-4+\varkappa}]$.

Finally let us prove the stability on the interval $0\leq t\leq \mu^{-2N+\varkappa}$ with $N\geq 3$. In this case we construct the perturbed Lyapunov function in the following form:
 \begin{eqnarray*}
        V_N({\bf y},t;T) & = & \big( V({\bf y},t)\big)^N+\mu^2 a_{N-1} V_{N-1}({\bf y},t;T), \\
        V_k({\bf y},t;T) & = & \big( V({\bf y},t)\big)^k+\mu^2 a_{k-1} V_{k-1}({\bf y},t;T), \quad k=2,\dots,N-1,
 \end{eqnarray*}
where the function $V_1({\bf y},t;T)$ is determined by formula \eqref{V1} and $T=\mu^{-2N+\varkappa}$.
The additional condition $\mathcal L V_k\leq 0$, imposed on each function $V_k$ in the ball $|{\bf y}|\leq r_0$, leads to the choice of the parameters $a_k$. Let $a_k=(k+1) n^2 h (B+C)\gamma^{-1}$, then the following inequalities hold:
 \begin{equation*}
    \begin{split}
  &      \mathcal L V_3\leq -3\gamma  V^3+2 \mu^2 \Big(3 n^2 h\, (B+C) - a_2 \gamma \Big)   V^2 \leq 0, \\
  &      \dots \\
   &     \mathcal L V_{N}\leq -N \gamma   V^{N}+(N-1) \mu^2  \Big(N n^2 h\, (B+C) - a_{N-1} \gamma \Big)   V^{N-1}\leq 0
    \end{split}
\end{equation*}
for all $|{\bf y}|\leq r_0$, $t\geq 0$, and $N=3,4,\dots$. Hence the function $V_N({\bf y}_0(s_t),s_t;T)$ is a nonnegative supermartingale. Since $V_N\geq V^N\geq 0$ for all $({\bf y}, t)\in \mathcal I$, we have the inequalities:
\begin{equation}\label{EstN}
\begin{array}{lll}
 \displaystyle {\bf P}(\sup_{ 0\leq  t\leq T} |{\bf y}( t)|\geq \varepsilon)
        & = & \displaystyle {\bf P}(\sup_{ 0\leq  t\leq  T} |{\bf y}( t)|^{2N}\geq \varepsilon^{2N})  \leq \\
        & \leq & \displaystyle {\bf P}(\sup_{ 0\leq  t\leq  T} \big(V({\bf y}( t), t)\big)^N\geq \varepsilon^{2N}) \leq \\
        & \leq & \displaystyle {\bf P}(\sup_{ 0\leq  t\leq  T} V_N({\bf y}( t), t;T)\geq \varepsilon^{2N})   = \\
        & = &  \displaystyle {\bf P}(\sup_{ t\geq  0} V_N({\bf y}(s_ t),s_ t;T)\geq \varepsilon^{2N}) \leq  \frac{V_N({\bf y}_0,0;T)}{\varepsilon^{2N}}.
\end{array}
\end{equation}
It can easily be checked that $V_N({\bf y}_0,0;T)=\mathcal O(|{\bf y}_0|^{2N})+\mathcal O(\mu^{\varkappa})$ as $|{\bf y}_0|\to 0$ and $\mu\to 0$. This yields that there exist $\delta>0$ and $\Delta>0$ such that $V_N({\bf y}_0,0;T)\leq \varepsilon^{2N} \nu$ for all $|{\bf y}_0|< \delta$,  $\mu<\Delta$. Combining this with \eqref{EstN}, we get estimate \eqref{EstP}. Therefore, $\forall \, N\in\mathbb N$, $h>0$ the solution ${\bf x}(t)\equiv 0$ to deterministic system \eqref{GS} is strongly stable with respect to white noise on the time interval $[0;\mu^{-2N+\varkappa}]$ uniformly for $G\in\mathcal A_h$. This completes the proof.
\end{proof}

\begin{Rem}
The theorem holds true if we choose $T=\mu^{-2N}\lambda (|{\bf y}_0|)$ with a positive continues function $\lambda(z)$ such that $\lambda(z)\to 0$ as $z \to 0$. In this case we get the inequality: $0\leq V_N({\bf y}_0,0;T)\leq m(|{\bf y}_0|^{2N}+\mu^{2N}+\lambda(|{\bf y}_0|))$ as $|{\bf y}_0|\leq  r_0$ and $\mu<1$ with a positive constant $m$.
The parameters $\delta$ and $\Delta$ are chosen such that  $\delta^{2N}+\lambda(\delta)\leq \varepsilon^{2N}\nu/(2m)$, $\Delta^{2N}\leq \varepsilon^{2N}\nu/(2m)$. Hence we have \eqref{EstP} for all $|{\bf y}_0|<\delta$ and $\mu<\Delta$. This implies that the solution ${\bf x}(t)\equiv 0$ is stable on the asymptotically long time interval $0 \leq t\leq \mu^{-2N}\lambda (|{\bf y}_0|)$.
\end{Rem}

\begin{Rem}
If the Lyapunov function $V({\bf x},t)$ have properties \eqref{LF} with $\gamma=0$, then the proposed construction of the perturbed Lyapunov function guarantees the stability of the trivial solution only on the time interval $0\leq t\leq o(\mu^{-2})$.
\end{Rem}

\section{Stability under damped perturbations}

Let us consider a narrower class $\mathcal J$ consisting of matrices $G({\bf y}, t)$ satisfying \eqref{Lip}, \eqref{Grow}, and having at least one continues function $\zeta(t)\in L_1(\mathbb R^+)$ such that
$$\sup_{|{\bf y}|\leq  r_0}\|\sigma({\bf y}, t)\|\leq \zeta( t) \quad \forall \,  t\geq  0, \quad \sigma=\frac{G\cdot G^\ast}{2}.$$
Let $\mathcal J_h$ be a set of matrices $G\in\mathcal J$ such that
$$\int\limits_{0}^\infty \zeta(\theta)\,d\theta\leq h.$$
We have
\begin{Th}\label{Th2}
Suppose that unperturbed system \eqref{GS} has a Lyapunov function $V({\bf x},t)$ satisfying \eqref{LF} with $\gamma=0$. Then for all $h>0$ the solution ${\bf x}(t)\equiv 0$ to system \eqref{GS} is strongly stable in probability with respect to white noise
as $t\geq 0$ uniformly for $G\in\mathcal J_h$.
\end{Th}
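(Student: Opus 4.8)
The plan is to run the $N=1$ construction from the proof of Theorem~\ref{Th1}, but with the finite cut-off $\mu^2 n^2 hC\,(T-t)$ replaced by the tail integral of $\zeta$, which stays finite on the whole half-line precisely because $\zeta\in L_1(\mathbb R^+)$. Fix arbitrary $\varepsilon,\nu>0$ with $\varepsilon<r_0$, take $G\in\mathcal J_h$ with an associated continuous $\zeta\in L_1(\mathbb R^+)$ dominating $\sup_{|\mathbf y|\le r_0}\|\sigma(\mathbf y,t)\|$, let $\mathbf y(t)$ solve \eqref{GSP} with $|\mathbf y_0|<\delta$, let $\tau_{\mathcal I}$ be the first exit time of $\mathbf y(t)$ from $\mathcal I=\{(\mathbf y,t):|\mathbf y|<\varepsilon,\ t>0\}$, and set $s_t=\min\{\tau_{\mathcal I},t\}$.

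First I would introduce the perturbed Lyapunov function
\[
    V_\infty(\mathbf y,t)=V(\mathbf y,t)+\mu^2 n^2 C\int_t^\infty\zeta(\theta)\,d\theta ,
\]
which is well defined, continuous, and satisfies $V_\infty\ge V\ge 0$ on $\mathbb R^n\times\mathbb R^+$. Using $\gamma=0$ in \eqref{LF}, the bound $|\partial_{x_i}\partial_{x_j}V|\le C$, the estimate $|\sigma_{ij}(\mathbf y,t)|\le\|\sigma(\mathbf y,t)\|\le\zeta(t)$ for $|\mathbf y|\le r_0$, and $\partial_t\int_t^\infty\zeta=-\zeta(t)$, one computes
\[
    \mathcal L V_\infty=\frac{dV}{dt}\Big|_{\eqref{GS}}+\mu^2\sum_{i,j=1}^n\sigma_{ij}\,\partial_{x_i}\partial_{x_j}V-\mu^2 n^2 C\,\zeta(t)\le 0\qquad\text{for } |\mathbf y|\le r_0,\ t\ge 0.
\]
Hence $V_\infty(\mathbf y(s_t),s_t)$ is a nonnegative supermartingale on $[0,\infty)$.

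Next, exactly as in \eqref{Est}, I would use the chain of inclusions $\{|\mathbf y(t)|\ge\varepsilon\}\subset\{V(\mathbf y(t),t)\ge\varepsilon^2\}\subset\{V_\infty(\mathbf y(t),t)\ge\varepsilon^2\}$, pass to the stopped process, and apply Doob's supermartingale inequality on the unbounded time interval:
\[
    {\bf P}\Big(\sup_{t\ge 0}|\mathbf y(t)|\ge\varepsilon\Big)\le{\bf P}\Big(\sup_{t\ge 0}V_\infty(\mathbf y(s_t),s_t)\ge\varepsilon^2\Big)\le\frac{V_\infty(\mathbf y_0,0)}{\varepsilon^2}=\frac{A|\mathbf y_0|^2+\mu^2 n^2 C\int_0^\infty\zeta(\theta)\,d\theta}{\varepsilon^2}\le\frac{A\delta^2+\mu^2 n^2 Ch}{\varepsilon^2},
\]
where the last step uses $G\in\mathcal J_h$. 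Choosing $\delta=(\varepsilon^2\nu/(2A))^{1/2}$ and $\Delta=(\varepsilon^2\nu/(2n^2 Ch))^{1/2}$ then makes the right-hand side $\le\nu$ for all $|\mathbf y_0|<\delta$, $\mu<\Delta$, $G\in\mathcal J_h$, which is exactly estimate \eqref{EstP} and hence the claimed strong stability on $t\ge 0$, uniformly over $\mathcal J_h$.

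The only point that differs from Theorem~\ref{Th1} is that the diffusive correction no longer needs to be truncated at a finite horizon: integrability of $\zeta$ furnishes a bound on the accumulated noise contribution that is uniform in $t\in[0,\infty)$, so the supermartingale estimate extends to the whole half-line. Accordingly, the main things to verify carefully are the validity of Doob's inequality for nonnegative supermartingales on an unbounded time set and the fact that $V_\infty$ still enjoys the lower bound $V_\infty\ge V\ge|\mathbf y|^2$ that turns a level set of $V_\infty$ back into the event $\{\sup_{t\ge0}|\mathbf y(t)|\ge\varepsilon\}$; both are routine, and since $\zeta$ may be taken independent of the particular $G\in\mathcal J_h$ (e.g. equal to the supremum bound itself), all constants are uniform over the class.
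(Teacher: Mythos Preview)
Your proof is correct and is essentially the same as the paper's. The only cosmetic difference is that the paper uses the correction term $\mu^2 n^2 C\,(h-\int_0^t\zeta(\theta)\,d\theta)$ in place of your tail integral $\mu^2 n^2 C\int_t^\infty\zeta(\theta)\,d\theta$; these differ by the nonnegative constant $\mu^2 n^2 C\,(h-\int_0^\infty\zeta)$, so the supermartingale computation and the final bound via Doob's inequality are identical. (One small slip: the equality $V_\infty(\mathbf y_0,0)=A|\mathbf y_0|^2+\cdots$ should be an inequality $\le$, since \eqref{LF} only gives $V\le A|\mathbf x|^2$; this does not affect the argument.)
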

\begin{proof}
We fix parameters $h>0$, $\nu>0$, and $0<\varepsilon<r_0$. Suppose ${\bf y}( t)$ is a solution to system \eqref{GSP} with $|{\bf y}_0|<\delta$, $G\in\mathcal J_h$,  $\tau_{\varepsilon}$ is a first exit time of the solution ${\bf y}(t)$ from the domain $\mathcal B_\varepsilon=\{{\bf y}\in \mathbb R^n: |{\bf y}|<\varepsilon\}$, and  $s_t=\min\{ \tau_{\varepsilon}, t\}$. Then the function ${\bf y}(s_t)$ is the process stopped at the first exit time from the ball $\mathcal B_\varepsilon$.

The Lyapunov function for system \eqref{GSP} is constructed in the form
$$
    V_\mu({\bf y}, t)=V({\bf y}, t) + \mu^2 n^2  C \cdot \big( h - \int\limits_{ 0}^ t \zeta(\theta)\,d\theta\big).
$$
From the definition of the set $\mathcal J_h$ and properties of the function $V$ it follows that $V_\mu\geq V\geq 0$ and
$$
    \mathcal L V_\mu = \frac{dV}{d t}\Big|_{\eqref{GS}}  +  \mu^2\sum_{i,j=1}^n \sigma_{ij} \, \partial_{x_i}\partial_{x_j} V - \mu^2 n^2 C \,\zeta( t) \leq 0 \quad \forall\, |{\bf y}|\leq r_0, \quad  t\geq 0.
$$
Consequently, the function $V_\mu({\bf y}(s_ t),s_ t)$  is a nonnegative supermartingale, and we have the estimates:
\begin{equation}\label{EstI}
\begin{array}{lll}
    \displaystyle {\bf P}(\sup_{ t\geq  0} |{\bf y}( t)|\geq \varepsilon)
        & = & \displaystyle {\bf P}(\sup_{ t\geq   0} |{\bf y}(s_ t)|^2\geq \varepsilon^2)\leq \\
        & \leq & \displaystyle {\bf P}(\sup_{ t\geq  0} V({\bf y}(s_ t),s_ t)\geq \varepsilon^2) \leq \\
        & \leq & \displaystyle {\bf P}(\sup_{ t\geq  0} V_\mu({\bf y}(s_ t),s_ t)\geq \varepsilon^2) = \\
        & = & \displaystyle {\bf P}(\sup_{ t\geq  0} V_\mu({\bf y}(s_ t),s_ t)\geq \varepsilon^2) \leq   \frac{V_\mu({\bf y}_0,0)}{\varepsilon^2}.
\end{array}
\end{equation}
The last estimate follows from Doob's supermartingale inequality.
We choose $\delta>0$, $\Delta>0$ such that $A |{\bf y}_0|^2 +\mu^2 n^2 C h \leq \varepsilon^2 \nu$ for all $|{\bf y}_0|<\delta$, $\mu<\Delta$.
Hence, $V_\mu({\bf y}_0,0)\leq\varepsilon^2 \nu$. Combining this with \eqref{EstI}, we get estimate \eqref{EstP}. Therefore, the solution ${\bf x}(t)\equiv 0$ to system \eqref{GS} is strongly stable with respect to white noise as $t\geq 0$ uniformly for $G\in\mathcal J_h$.
\end{proof}

\section{Examples}
1. Let us consider the perturbed It\^{o} stochastic differential equation: $d y(t) = \mu d w(t)$. The solution $x(t)\equiv 0$ to the unperturbed system $\dot x=0$ is stable (but not asymptotically stable); the Lyapunov function $V=x^2$ satisfies estimates \eqref{LF} with $\gamma=0$.
Let us show that the trivial solution is strongly stable w.r.t. white noise (with $G\equiv 1$) on the time interval $0\leq t\leq o(\mu^{-2})$.

It can easily be checked that the function $y(t)=y_0+\mu w(t)$ is the solution to the perturbed equation. For simplicity, we assume $y_0=0$. Using the known formula for the Wiener process (see \cite[p.~173]{BS02}), we get
\begin{gather*}
    {\bf P}(\sup_{0\leq  t\leq T} |w(t)|\geq c)={\rm erfc}\Big(\frac{c}{\sqrt{2T}}\Big) + \frac{1}{\sqrt{2\pi T}}\sum\limits_{k\neq 0} (-1)^{k+1}\int\limits_{(2k-1)c}^{(2k+1)c} \exp\Big(-\frac{u^2}{2T}\Big) d u,
\end{gather*}
$c,T={\hbox{\rm const}}>0$.
This implies that for all $0<\varkappa<1$, $\varepsilon>0$, and $\nu>0$ there exists $\Delta=({\varepsilon^2}/{|4 \ln(\nu/\sqrt{8})|})^{1/\varkappa}$ such that
$${\bf P}(\sup_{0\leq  t\leq \mu^{-2+\varkappa}} |y(t)|\geq \varepsilon)\leq \nu$$
as $\mu<\Delta$.
In this example the strong stability is not preserved as $t\gg \mu^{-2}$. The stretching of the time interval can be achieved by imposing the additional constraints on the perturbation (see theorem~\ref{Th2}).

2. Let us consider a more complicated equation:
$$dy(t)=- \frac{y(1-y^2)}{1+y^2}\,d t+\mu\,G(y,t)\,dw( t), \quad 0<\mu\ll 1.$$
The corresponding deterministic system
 $$\frac{dx}{dt}=- \frac{x(1-x^2)}{1+x^2}$$
 has three equilibria: $\{-1, 0, 1\}$. It is easily shown that the trivial solution ${\bf x}(t)\equiv 0$ is locally exponentially stable; the Lyapunov function $V(x)=x^2$ satisfies the inequality $dV/d t\leq -V$ for all $|x|\leq 1/\sqrt3$, $t\geq 0$. From~\cite{Hasm65} it follows that in this case there is no stability of the trivial solution under white noise on a semi-infinite time interval with $G\in\mathcal A$.
However, it follows from theorem~\ref{Th1} that the solution $x( t)\equiv 0$ is strongly stable on the finite but asymptotically long time interval $0\leq  t\leq \mu^{-2N+\varkappa}$.

\section*{Acknowledgements}
The author is grateful to F.S. Nasyrov and L.A. Kalyakin for useful discussions and valuable comments.

This research was supported by the Russian Foundation for Basic Research  (project no. 14-01-31054).

\end{document}